\DeclareMathAlphabet{\mathpzc}{OT1}{pzc}{m}{it}
\newcounter{main}
\newtheorem{theorem}{Theorem}[section]
\newtheorem{lemma}[theorem]{Lemma}
\newtheorem{definition}{Definition}[section]
\newtheorem{maintheorem}{Theorem}
\newcommand{\blanksquare}{\,\,\,$\sqcup\!\!\!\!\sqcap$}
\newcounter{example}
\newenvironment{example}%
{{\stepcounter{example}}{\flushleft {\bf Example \arabic{example}:}}}%
{\par}
\title[Perturbations of Mathieu equations]
{Perturbations of Mathieu equations with parametric excitation of large period}
\date{22 June 2010}
\author[M. Bessa]{M\'{a}rio Bessa}
\address{Departamento de Matem\'atica da Universidade do Porto, 
Rua do Campo Alegre, 687, 
4169-007 Porto, Portugal \\ ESTGOH-Instituto Po\-li\-t\'ec\-ni\-co de Coimbra, Rua General Santos Costa, 3400-124 Oliveira do Hospital, Portugal}
\email{bessa@fc.up.pt}
\begin{document}

\begin{abstract}
We consider a linear differential system of Mathieu equations with periodic coefficients over periodic closed orbits and we prove that, arbitrarily close to this system, there is a linear differential system of Hamiltonian damped Mathieu equations with periodic coefficients over periodic closed orbits such that, all but a finite number of closed periodic coefficients, have unstable solutions. The perturbations will be peformed in the periodic coefficients.
\end{abstract}

\maketitle

\noindent\emph{MSC 2000:} primary 34D10, 34D08,  secondary 34L30.\\
\emph{keywords:} Mathieu equation, characteristic multipliers, linear differential systems.\\

\begin{section}{Introduction}

\begin{subsection}{The Mathieu equation}

Let $\psi\colon \mathbb{R}\rightarrow \mathbb{R}$ be a differentiable $\tau$-periodic function, i.e., $\psi(t+\tau)=\psi(t)$ for some $\tau>0$ and all $t\in\mathbb{R}$. The \emph{Hill's equation} is defined by 
\begin{equation}\label{HE}
\ddot{y}+\psi(t)y=0,\,\, t\geq 0.
\end{equation}
Fixing the parameters $\omega, \epsilon>0$ and letting $\psi(t)=\cos(2\pi t)$ we obtain a very special case of a Hill equation - the \emph{Mathieu equation}: 
\begin{equation}\label{ME0}
\ddot{y}+(\omega^2+\epsilon \cos(2\pi t))y=0,\,\, t\geq 0.
\end{equation}
These type of second-order differential equations, were introduced by Mathieu~\cite{Ma} in the study of oscillations. The parameters $\omega^2$ and $\epsilon\cos(2\pi t)$ represent respectively, the frequency of the oscillation and the parametric excitation of strength $\epsilon$. 

In what follows we generalize and assume that the parametric excitation of strength $\epsilon$ can take the form $\epsilon\psi(t)$, where $\psi(t)$ is any function in the broader setting of smooth periodic functions. So, from now on we call the differential equation of second order
\begin{equation}\label{ME}
\ddot{y}+(\omega^2+\epsilon \psi(t))y=0,\,\, t\geq 0,
\end{equation}
a Mathieu equation. 

More generally, by inputing a first order term, we get the \emph{damped Mathieu equation} (see~\cite{TN}) which is defined by 
\begin{equation}\label{DME}
\ddot{y}+\gamma \dot{y}+(\omega^2+\epsilon \psi(t))y=0,\,\, t\geq 0, 
\end{equation}
where $\gamma>0$ is very small and, for this reason, (\ref{DME}) can be seen as a small perturbation of (\ref{ME}). In this paper we consider damped Mathieu equations with periodic coefficients. In fact, we say that
\begin{equation}\label{DMEP}
\ddot{y}+\Gamma(t)\dot{y}+(\omega^2+\epsilon \Psi(t))y=0,\,\, t\geq 0, 
\end{equation}
is a \emph{damped Mathieu equation with periodic coefficients $\Gamma(t)$ and $\Psi(t)$} if $\Gamma(t)$ and $\Psi(t)$ are two $\tau$-periodic smooth real functions with $\Gamma(t)$ very close to zero. In the sequel we will consider perturbations of the equation (\ref{ME}) and these perturbations are defined by damped Mathieu equations with periodic coefficients $\Gamma(t)$ and $\Psi(t)$, where $\Gamma(t)$, as we mentioned above, is close to zero and $\Psi(t)$ is close to $\psi(t)$. Actually, the central question of this paper is to fix an equation of type (\ref{ME}) and then, keeping the parameters $\omega$ and $\epsilon$ fixed and allowing some perturbations of the periodic coefficients $\Gamma(t)$ and $\Psi(t)$, understand the typical dynamical behavior of its solutions. To be more precise we will make use of an abstract and very general result about the $C^0$ perturbations of the dynamical cocycle along closed orbits (see~\cite{BGV,BR}) and connect it with the second-order differential equations of type (\ref{ME}) and (\ref{DMEP}) via its representation as linear differential systems.

\end{subsection}

\begin{subsection}{Linear differential systems}

In order to study families of linear nonau\-to\-no\-mous differential equations like equations (\ref{ME}), (\ref{DME}) and (\ref{DMEP}) it is natural to write them in the terminology of linear differential systems; let $M$ be a compact, boundaryless Hausdorff manifold with dimension $n\geq 2$ and $\varphi^{t}\colon M\rightarrow{M}$ be a flow. A two-dimensional cocycle based on $\varphi^{t}$ is defined by a flow
$\Phi^{t}(p)$ differentiable on the time parameter
$t\in{\mathbb{R}}$,
acting on the general linear group ${{gl}(2,\mathbb{R})}$ (i.e., the set of $2\times 2$ matrices with real entries and nonzero determinant). Together they form the linear
skew-product flow:

$$
\begin{array}{cccc}
\Upsilon^{t}\colon & M\times{\mathbb{R}^{2}} & \longrightarrow & M\times{\mathbb{R}^{2}} \\
& (p,v) & \longmapsto & (\varphi^{t}(p),\Phi^{t}(p)\cdot{v})
\end{array}
$$

The linear flow $\Phi^{t}$ satisfy the so-called \emph{cocycle identity}, that is, for all $t,s\in\mathbb{R}$ and $p\in M$,
$$\Phi^{t+s}(p)=\Phi^{s}(\varphi^{t}(p))\cdot{\Phi^{t}(p)}.$$

Let $\mathfrak{gl}(2,\mathbb{R})$ denote the Lie algebra associated to the Lie group $gl(2,\mathbb{R})$. If we define a map $A\colon M\rightarrow{\mathfrak{gl}(2,\mathbb{R})}$ in
a point $p\in{M}$ by:
$$A(p)=\dot{\Phi}^{s}(p)|_{s=0}$$
and along the orbit $\varphi^{t}(p)$ by:
\begin{equation}\label{lvi}
A(\varphi^{t}(p))=\dot{\Phi}^{s}(p)|_{s=t}\cdot{[\Phi^{t}(p)]^{-1}},
\end{equation}
then $\Phi^{t}(p)$ will be the solution of the \emph{linear variational equation}:
\begin{equation}\label{lve}
\dot{u}(s)_{|s=t}=A(\varphi^{t}(p))\cdot u(t),
\end{equation}
and $\Phi^{t}(p)$ is called the \emph{fundamental matrix} or \emph{fundamental solution}. Given a
cocycle $\Phi^{t}$ we can induce the associated infinitesimal generator $A$ by
using~(\ref{lvi}) and given $A$ we can recover the cocycle by
solving the linear variational equation~(\ref{lve}), from which we get
$\Phi_{A}^{t}$. A two-dimensional \emph{linear differential system} is a three-tuple $(M, \varphi^t, A)$; $\varphi^t$ defines the action in the \emph{base} and $A$ the action in the \emph{fiber}.

In what follows we will be interested in cocycles evolving on the linear groups $sl(2,\mathbb{R})$ and $gl(2,\mathbb{R})$. Recall that $sl(2,\mathbb{R})$ denotes the \emph{special linear group} of matrices with real entries and with determinant equal to one, i.e., they are \emph{area-preserving} or \emph{Hamiltonian}. 

Using Liouville's formula we can relate the trace of $A\in \mathfrak{gl}(2,\mathbb{R})$ with the determinant of $\Phi^t_A$ by:
\begin{equation}\label{liouville}
exp\left({\int_{0}^{t}\text{Tr}A(\varphi^{s}(p))ds}\right)=\det\Phi^{t}(p), 
\end{equation}
where $\text{Tr}(A)$ denoted the trace of $A$.

If $\Phi^t_A$ evolves on $sl(2,\mathbb{R})$, then the associated infinitesimal generator $A$ is \emph{traceless} (cf. (\ref{liouville})) i.e., $\text{Tr}(A)=0$. 

We topologize the space of linear differential systems in the following way; given two functions $A,B\colon M\rightarrow \mathfrak{gl}(2,\mathbb{R})$ over $\varphi^t\colon M\rightarrow M$ we can measure its distance by $$\underset{p\in M}{\max}\|A(p)-B(p)\|,$$ where $\|\cdot\|$ denotes the standard uniform norm.

If the function $A\colon M\rightarrow \mathfrak{gl}(2,\mathbb{R})$ varies continuously with the space variable $p$ (for all $p\in M$) we say that the linear differential system is \emph{continuous}. If the function $A\colon M\rightarrow \mathfrak{gl}(2,\mathbb{R})$ is bounded with respect to the space variable $p$ (for all $p\in M$) we say that the linear differential systems is \emph{bounded}. In this paper we deal with linear differential systems with Mathieu equations in the fiber and also its perturbations (cf. \S \ref{perturb}), the main results in this paper are true for both continuous and bounded settings.

\end{subsection}

\begin{subsection}{Writing equations in the language of linear differential systems}

Given a flow $\varphi^t\colon M\rightarrow M$ we will denote by $\text{Per}(\varphi^t)$ the set of its \emph{closed orbits}. Let $\psi(t)$ be a $\tau$-periodic differentiable coefficient and $p\in \text{Per}(\varphi^t)$ be a closed orbit of period $\pi(p)=\tau$. As a standard reduction on the order of the differential equation via the \emph{momentum} $\dot{y}=z$ we write 
$$\left\{\begin{array}{ccc}
\dot{y}=z\,\,\,\,\,\,\,\,\,\,\,\,\,\,\,\,\,\,\,\,\,\,\,\,\,\,\,\,\,\,\,\,\,\,\,\,  \\
\dot{z}=-(\omega^2+\epsilon\psi(t))y
\end{array}\right.$$
then we obtain

\begin{equation}\label{ME2}
\dot{x}= A(t)\cdot x, 
\end{equation}
where $\dot{x}=(\dot{y},\dot{z})$ and $A(t)=\begin{pmatrix}
0 & 1 \\ -\omega^2-\epsilon\psi(t) & 0
\end{pmatrix}$. 

Once we have the linear variational equation (\ref{ME2}) we are interested in the behavior of the linear differential system induced by the traceless system $A(t)$, i.e., the dynamics of the fundamental matrix solution $\Phi^t_A(p)$.

The eigenvalues of $\Phi_A^{\pi(p)}(p)$ are called the \emph{characteristic multipliers} of the system $A$. It is trivial to see that the product of the characteristic multipliers (or the characteristic multiplier if there is only one) of the system (\ref{ME2}) is equal to one. If they are nonreal conjugates the solution is called \emph{stable} (and the point $p$ \emph{elliptic}), and if one of them is larger than one (in modulus) the solution is called \emph{unstable} (and the point $p$ \emph{hyperbolic}). 

The regions of stability \emph{versus} instability depending on the two parameters of the equation (\ref{ME0}) are well-known (see e.g., \cite[Figure 8.13]{HK}). The tongues are formed by \emph{parabolic} points for $\Phi_A^{\pi(p)}(p)$ (corresponding to characteristic multipliers equal to $1$ or $-1$).

\bigskip

More generally, given any $A(t)=\begin{pmatrix}
\alpha(t) & \beta(t) \\ \gamma(t) & \zeta(t)
\end{pmatrix}\in\mathfrak{gl}(2,\mathbb{R})$, where $\alpha(t)$, $\beta(t)$, $\gamma(t)$ and $\zeta(t)$ are smooth $\tau$-periodic coefficients, the second-order differential equation associated to it is equal to
\begin{equation}\label{form}
\ddot{y}-\left(\text{Tr}(A)+\frac{\dot{\beta}}{\beta}\right)\dot{y}+\left(\det(A)-\dot{\alpha}+\frac{\dot{\beta}\alpha}{\beta}\right)y=0.
\end{equation}

Let us see how we derived (\ref{form}); we have
$$\left\{\begin{array}{ccc}
\dot{y}=\alpha(t)y+\beta(t)z \\
\dot{z}=\gamma(t)y+\zeta(t)z 
\end{array}\right.,$$
now, taking derivatives with respect to (w.r.t.) time, we obtain
\begin{eqnarray*}
 \ddot{y}&=&\dot{\alpha}y+\alpha\dot{y}+\dot{\beta}z+\beta\dot{z}=
\dot{\alpha}y+\alpha\dot{y}+\dot{\beta}\left( \frac{\dot{y}-\alpha y}{\beta}\right)+\beta(\gamma y+\zeta z )\\
&=& \left(\alpha+\frac{\dot{\beta}}{\beta}\right)\dot{y}+\left(\dot{\alpha}-\frac{\dot{\beta}\alpha}{\beta}+\beta\gamma\right)y+\beta\zeta\left( \frac{\dot{y}-\alpha y}{\beta}\right)\\
&=& \left(\alpha+\frac{\dot{\beta}}{\beta}+\zeta\right)\dot{y}+\left(\dot{\alpha}-\frac{\dot{\beta}\alpha}{\beta}+\beta\gamma-\zeta\alpha\right)y\\
&=& \left(\text{Tr}(A)+\frac{\dot{\beta}}{\beta}\right)\dot{y}+\left(-\det(A)+\dot{\alpha}-\frac{\dot{\beta}\alpha}{\beta}\right)y.
\end{eqnarray*}

One trivial observation can be taken from equation (\ref{form}); if $\alpha(t)$ and  $\beta(t)$ are constant, thus $\dot{y}=a y +b z$, for $a,b\in\mathbb{R}$, then (\ref{form}) is reduced to the simple form  $\ddot{y}-\text{Tr}(A)\dot{y}+\det(A)y=0$. For example, we can easily check that equation (\ref{ME}) has this form if we take $\zeta(t)=-a$ and $\gamma(t)=b^{-1}(-\omega^2-\epsilon\psi(t)-a^2)$. 

If we consider a traceless system $A(t)$, then $\alpha(t)+\zeta(t)=0$ and so (\ref{form}) become:

\begin{equation}\label{form2}
\ddot{y}-\frac{\dot{\beta}}{\beta}\dot{y}+\left(\det(A)-\dot{\alpha}+\frac{\dot{\beta}\alpha}{\beta}\right)y=0.
\end{equation}

If, in equation (\ref{DMEP}), we let $\dot{y}=z$ then, $$\left\{\begin{array}{ccc}
\dot{y}=z\,\,\,\,\,\,\,\,\,\,\,\,\,\,\,\,\,\,\,\,\,\,\,\,\,\,\,\,\,\,\,\,\,\,\,\,\,\,\,\,\,\,\,\,\,\,\,\,\,\,\,\,\,\,\,\,   \\
\dot{z}=-(\omega^2+\epsilon\Psi(t))y-\Gamma(t)z
\end{array}\right.$$
and so $\dot{x}= C(t)\cdot x$, where $\dot{x}=(\dot{y},\dot{z})$ and the nontraceless system $$C(t)=\begin{pmatrix}
0 & 1 \\ -\omega^2-\epsilon\Psi(t) & -\Gamma(t)
\end{pmatrix}.$$

On the other hand if, in equation (\ref{ME}), we let $\dot{y}=\alpha(t)y+b\,z$ then, using (\ref{form2}) and forcing the traceless condition $\zeta(t)=-\alpha(t)$, we obtain the traceless linear differential system $\mathcal{A}(t)=\begin{pmatrix}
\alpha(t) & b \\ \gamma(t) & -\alpha(t)
\end{pmatrix}$, where $b\in\mathbb{R}$. Now, since we have two-degrees of freedom, given any smooth $\tau$-periodic function $\alpha(t)$, we define the $\tau$-periodic function:
\begin{equation}\label{form3}
\gamma(t):=b^{-1}(-\omega^2-\epsilon\psi(t)-\dot{\alpha}(t)-\alpha^2(t)).
\end{equation}
Observe that, if we consider $\dot{y}=\alpha(t)y+b\,z$ such that $\alpha(t)$ is $C^1$-close\footnote{The $\delta$-$C^1$-closeness of two functions $f(t)$ and $g(t)$ means that $|f(t)-g(t)|<\delta$ and $|\dot{f}(t)-\dot{g}(t)|<\delta$ for all $t$.} to the zero function and $b$ is close to $1$, then $\mathcal{A}(t)$ can be seen as a small perturbation of $A(t)$, defined in (\ref{ME2}), and both are traceless. If fact, what we allow here is a small perturbation of the momentum $\dot{y}=z$.

Taking this last paragraph into account, we proceed in the same way in equation (\ref{DMEP}) and we let $\dot{y}=\alpha(t)y+\beta(t)z$ where $\alpha(t)$ is $C^1$-close to the zero function and $\beta(t)=e^{-\int_0^t \Gamma(s)ds}$. Finally, we define the $\tau$-periodic function $\gamma(t)$ by:
$$
\gamma(t):=\beta^{-1}(t)\left(-\omega^2-\epsilon\psi(t)-\dot{\alpha}(t)-\alpha^2(t)+\frac{\dot{\beta}(t)\alpha(t)}{\beta(t)}\right).
$$
With this choice for $\dot{y}$ we obtain a \textbf{traceless} linear differential system induced by equation (\ref{DMEP}) very close to $C(t)$.

\end{subsection}

\begin{subsection}{On the neighborhoods of the equation (\ref{ME})}\label{perturb}
In this work we are going to perturb the equation (\ref{ME}) allowing that these perturbations being of the form (\ref{DMEP}). Moreover, we assure that, for a certain perturbation of the momentum, $\dot{y}=\alpha(t)y+\beta(t)z$, the equation (\ref{DMEP}) has a traceless linear differential system associated. 

Fix positive real numbers $\omega,\epsilon$ and $\delta$ and a $\tau$-periodic smooth function $\psi(t)$. This elements give rise to an equation $\mathcal{M}=\mathcal{M}(\omega,\epsilon,\psi(t))$ of the form (\ref{ME}). Let $A$ be the  linear differential system induced by equation (\ref{ME}) with $\dot{y}=z$. We say that $\mathcal{U}(\mathcal{M},\delta)$ is a $\delta$-\emph{neighborhood of $\mathcal{M}$} if:
\begin{enumerate}
\item [(i)] there exists a choice for $\dot{y}$; $\dot{y}=\alpha(t)y+\beta(t)z$, where $\alpha(t)$ is $C^1$-close to the zero function and $\beta(t)$ is $C^1$-close to the function $f(t)=1$;
\item [(ii)] there exists a $\tau$-periodic coefficient $$\gamma(t)=\beta^{-1}(t)\left(-\omega^2-\epsilon\psi(t)-\dot{\alpha}(t)-\alpha^2(t)+\frac{\dot{\beta}(t)\alpha(t)}{\beta(t)}\right),$$ $C^0$-close to $\omega^2-\epsilon\psi(t)$ such that the linear differential system $$\mathcal{C}(t)=\begin{pmatrix}
\alpha(t) & \beta(t) \\ \gamma(t) & -\alpha(t)
\end{pmatrix}$$ satisfy
$$\|A(t)-\mathcal{C}(t)\|<\delta.$$
\end{enumerate}

More generally, we can define a $\delta$-neighborhood of a linear differential system with equations (\ref{ME}) on the fiber in the following way; 
\begin{definition}\label{definition}
Fix a base flow $\varphi^t\colon M\rightarrow M$ and, for each $p\in\text{Per}(\varphi^t)$ of period $\pi(p)$, take $\pi(p)$-periodic smooth coefficients $\psi_p(t)$. Fix also $\omega, \epsilon>0$. Let $A\colon \text{Per}(\varphi^t)\rightarrow \mathfrak{sl}(2,\mathbb{R})$ be the linear differential system defined by:
$$A(p,t):=\begin{pmatrix}
0 & 1 \\ -\omega^2-\epsilon\psi_p(t) & 0
         \end{pmatrix}
$$
We say that $\mathcal{U}(A,\delta)$ is a $\delta$-\emph{neighborhood of $A$} if, for each $p\in\text{Per}(\varphi^t)$,
\begin{enumerate}
\item [(i)] there exists a choice for $\dot{y}$; $\dot{y}=\alpha_p(t)y+\beta_p(t)z$ where $\alpha_p(t)$ is $C^1$-close to the zero function (w.r.t. $t$) and $\beta_p(t)$ is $C^1$-close (w.r.t. $t$) to the function $f(t)=1$;
\item [(ii)] there exists a $\pi(p)$-periodic coefficient $\gamma\colon \text{Per}(\varphi^t)\times \mathbb{R}$ defined by  $$\gamma_p(t)=\beta_p^{-1}(t)\left(-\omega^2-\epsilon\psi_p(t)-\dot{\alpha}_p(t)-\alpha_p^2(t)+\frac{\dot{\beta}_p(t)\alpha_p(t)}{\beta_p(t)}\right)$$ and $C^0$-close to $\omega^2-\epsilon\psi_p(t)$ such that the linear differential system 
\end{enumerate}
$$\mathcal{C}(p,t)=\begin{pmatrix}
\alpha_p(t) & \beta_p(t) \\ \gamma_p(t) & -\alpha_p(t)
\end{pmatrix}$$
satisfy $\|A(p,t)-\mathcal{C}(p,t)\|<\delta$, 
where $A(p,t)$ is the induced linear differential system by equation (\ref{ME}) with $\dot{y}=z$ and a $\pi(t)$-periodic coefficient $\psi_p(t)$. 
\end{definition}

\end{subsection}

\begin{subsection}{Hyperbolicity and dominated splitting}
 
Let $(M,\varphi^t,A)$ be a linear differential system. The set $\Lambda\subseteq{M}$ is said to be a \emph{uniformly hyperbolic set} if there exists a uniform $\ell\in\mathbb{N}$ such that, for any $p\in{\Lambda}$, there is a $\Phi_{A}^{t}(p)$-invariant decomposition $\mathbb{R}^{2}_p=N_{p}^{u}\oplus{N_{p}^{s}}$ satisfying the following inequalities:
$$\|\Phi_{A}^{-\ell}(p)|_{N_{p}^{u}}\|\leq\frac{1}{2} \text{  and  }\|\Phi_{A}^{\ell}(p)|_{N_{p}^{s}}\|\leq\frac{1}{2}.$$
If $\Lambda=M$, then we say that $(M,\varphi^t,A)$ is \emph{uniformly hyperbolic}. The concept of uniform hyperbolicity is equivalent to the \emph{exponential dichotomy} concept, see~\cite{C} for details. 

More generally, a $\varphi^{t}$-invariant set $\Lambda_\ell\subseteq{M}$ is said to have an  \emph{$\ell$-dominated splitting} for $(M,\varphi^t,A)$, where $p\in \Lambda_\ell$, if for any $p\in{\Lambda}_{\ell}$, there is a
$\Phi_{A}^{t}(p)$-invariant decomposition
$\mathbb{R}^{2}_p=N_{p}^{u}\oplus{N_{p}^{s}}$ satisfying,
$$\frac{\|\Phi_{A}^{\ell}(q)|_{N_{q}^{s}}\|}{\|\Phi_{A}^{\ell}(q)|_{N_{q}^{u}}\|}\leq{\frac{1}{2}}.$$ 

When the fiber $N_p^u$ dominates $N_p^s$ both fibers may contract, however $N_p^s$ is most contracting than $N_p^u$. On the other hand if both fibers expand, $N_p^s$ is less expanding than $N_p^u$ (see \cite{BDV} for more details on dominated splitting). In rough terms, in the two-dimensional Hamiltonian context, hyperbolicity is tantamount to dominated splitting. This is the content of the following lemma whose proof is elementary (see~\cite[\S 4.1]{B}).

\begin{lemma}\label{hyperbolic}
If $(M,\varphi^t,A)$ is a linear differential system, $A\in\mathfrak{sl}(2,\mathbb{R})$ and $\Lambda$ is a set with dominated splitting, then $\Lambda$ is hyperbolic. 
\end{lemma}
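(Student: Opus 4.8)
The plan is to exploit the very special structure of $\mathfrak{sl}(2,\mathbb{R})$: the cocycle $\Phi_A^t(p)$ has determinant $1$ at every time, so the two "rates of expansion" multiply to $1$. I would work along each orbit separately, since dominated splitting is an orbit-by-orbit invariant decomposition. Fix $p\in\Lambda$ and write $N^u_p\oplus N^s_p=\mathbb{R}^2_p$ for the $\ell$-dominated splitting. The domination inequality says $\|\Phi_A^\ell(q)|_{N^s_q}\|\le\tfrac12\|\Phi_A^\ell(q)|_{N^u_q}\|$ for all $q$ in the orbit of $p$. The goal is to upgrade this ratio estimate to the absolute estimates $\|\Phi_A^{-\ell'}(q)|_{N^u_q}\|\le\tfrac12$ and $\|\Phi_A^{\ell'}(q)|_{N^s_q}\|\le\tfrac12$ for some uniform $\ell'$, possibly larger than $\ell$.

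First I would record the determinant identity: since $A\in\mathfrak{sl}(2,\mathbb{R})$ forces $\det\Phi_A^t(q)=1$ for all $t$ (by Liouville's formula~(\ref{liouville}), as $\mathrm{Tr}A\equiv 0$), and since $N^u$ and $N^s$ are one-dimensional and complementary, the product of the one-dimensional norms $\|\Phi_A^\ell(q)|_{N^u_q}\|\cdot\|\Phi_A^\ell(q)|_{N^s_q}\|$ equals $1$ up to a factor controlled by the angle between $N^u_q$ and $N^s_q$. The second key point is that domination forces this angle to be bounded away from $0$ uniformly along $\Lambda$: if two invariant directions got arbitrarily close, the ratio of their growth could not stay below $\tfrac12$ — this is the standard "domination implies a uniform angle gap" fact, proved by an elementary compactness/continuity argument using the $\ell$-dominated inequality applied to nearby iterates. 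Combining the angle bound with $\det=1$ gives $\|\Phi_A^\ell(q)|_{N^u_q}\|\cdot\|\Phi_A^\ell(q)|_{N^s_q}\|\ge c$ for a uniform $c>0$; feeding this back into the ratio inequality $\|\Phi_A^\ell(q)|_{N^s_q}\|/\|\Phi_A^\ell(q)|_{N^u_q}\|\le\tfrac12$ yields $\|\Phi_A^\ell(q)|_{N^u_q}\|\ge\sqrt{c/2}$ and $\|\Phi_A^\ell(q)|_{N^s_q}\|\le\sqrt{1/(2c)}\cdot(\text{angle factor})$. A priori these one-step bounds need not be below $\tfrac12$, so the final step is to iterate: using the cocycle identity and submultiplicativity along $q,\varphi^\ell(q),\varphi^{2\ell}(q),\dots$, the ratio at time $k\ell$ is at most $2^{-k}$, and together with the uniform lower bound $\|\Phi_A^{k\ell}(q)|_{N^u_q}\|\ge$ (something controlled via $\det=1$ and the uniform angle) this forces $\|\Phi_A^{k\ell}(q)|_{N^s_q}\|\to 0$ and $\|\Phi_A^{-k\ell}(q)|_{N^u_q}\|\to 0$ exponentially; choosing $k$ large enough that both fall below $\tfrac12$ and setting $\ell'=k\ell$ gives the uniform hyperbolicity constant.

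I expect the main obstacle to be the uniform angle estimate between $N^u$ and $N^s$ along $\Lambda$, i.e. showing domination rules out a near-tangency: one must argue that if $\angle(N^u_q,N^s_q)$ were tiny, then after applying $\Phi_A^\ell$ the image of the $N^s$-direction, being dominated, would have to rotate toward the $N^u$-direction and shrink, but the determinant constraint and the invariance of the splitting under all forward and backward iterates along the orbit produce a contradiction with the fixed ratio $\tfrac12$. Since the lemma asserts the proof is elementary and refers to~\cite[\S 4.1]{B}, I would present this angle-gap argument compactly — it is the only non-formal ingredient — and treat the determinant bookkeeping and the geometric-series iteration as routine.
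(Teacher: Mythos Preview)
The paper does not actually prove this lemma; it only asserts that the proof is elementary and defers to~\cite[\S4.1]{B}. So there is no in-paper argument to compare your proposal against. That said, your outline is the standard route to the result in the two-dimensional area-preserving setting and is correct in substance: the determinant-one condition forces $\|\Phi_A^{k\ell}|_{N^u}\|\cdot\|\Phi_A^{k\ell}|_{N^s}\|$ to equal the ratio of the sines of the angles between the subbundles at the endpoints, the domination ratio $2^{-k}$ then converts into absolute exponential contraction on $N^s$ and expansion on $N^u$, and a large enough multiple $\ell'=k\ell$ gives the hyperbolicity constants.

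The one place your sketch is vague is the uniform angle gap, and here you can be more concrete than the heuristic you wrote. Use only the boundedness of the system (assumed throughout the paper): there is $K>0$ with $\|\Phi_A^{\ell}(q)\|\le K$ and, since $\det\Phi_A^{\ell}(q)=1$, also $\|\Phi_A^{\ell}(q)v\|\ge K^{-1}$ for every unit vector $v$. If unit vectors $e_s\in N^s_q$ and $e_u\in N^u_q$ made an angle $\theta$, then $\|\Phi_A^{\ell}(q)e_s-\Phi_A^{\ell}(q)e_u\|\le K\|e_s-e_u\|\le 2K\sin(\theta/2)$, while domination together with the lower bound gives
\[
\|\Phi_A^{\ell}(q)e_u-\Phi_A^{\ell}(q)e_s\|\ \ge\ \|\Phi_A^{\ell}(q)e_u\|-\|\Phi_A^{\ell}(q)e_s\|\ \ge\ \tfrac12\|\Phi_A^{\ell}(q)e_u\|\ \ge\ \tfrac{1}{2K}.
\]
Hence $\sin(\theta/2)\ge (4K^2)^{-1}$, a uniform lower bound on the angle over all of $\Lambda$. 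With this in hand the rest of your argument (iterating to $k\ell$ and reading off the hyperbolicity constants) goes through exactly as you describe; the minor constants you wrote (e.g.\ $\sqrt{c/2}$) are off by harmless factors but the exponential decay in $k$ is what matters.
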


\end{subsection}

\begin{subsection}{An abstract general setting and statement of the result}

In order to state our results with great generality we will use the nomenclature developed in \cite{BGV} (for the discrete case) and then generalized to the continuous-time volume-preserving case in (\cite{BR}); Let $\Sigma \subset \text{Per}(\varphi^t)$ be a set formed by a countable union of closed orbits of $\varphi^t\colon M\rightarrow M$. The linear differential system $\mathcal{A}=(\Sigma, \varphi^t, A)$ is \emph{bounded} if there exists $K>0$ such that $\|A(x)\| \leq K$, for all $x \in \Sigma$. The linear differential system $\mathcal{A}$ is said to be a \emph{large period system} if the number of closed orbits of $\Sigma$ with period less or equal to $\tau$ is finite, for any $\tau >0$.

\medskip

\begin{example}
\emph{An Anosov flow (see e.g.,~\cite{BDV}) is an example of a flow with the \textbf{large period property} described above. Fix $\omega,\epsilon>0$ and take a Mathieu equation $$\ddot{y}+\left(\omega^2+\epsilon \cos\left(\frac{2\pi}{\pi(p)} t\right)\right)y=0,$$ over each closed orbit $p$ of period $\pi(p)$ of an Anosov flow to obtain $A\colon \text{Per}(\varphi^t)\rightarrow \mathfrak{sl}(2,\mathbb{R})$ which is an example of a bounded and large period system.}
\end{example}

\medskip

A linear differential system $\mathcal{B}=(\Sigma, \varphi^t, B)$  is a \emph{conservative perturbation} of a bounded system $\mathcal{A}$ if, for every $\epsilon >0$, $\|A(x) - B(x)\| < \epsilon$, up to points $x$ belonging to a finite number of orbits, and $\mathcal{B}$ is conservative i.e., if $\text{Tr}(A)=\text{Tr}(B)$.

A direct application of the Gronwall inequality gives that $$\|\Phi_{A}^t(x)-\Phi_{B}^t(x)\| \leq e^{K|t|}\|A(x) - B(x)\|.$$
In particular $\Phi_{B}^1$ is a perturbation of $\Phi_{A}^1$ in the sense introduced in~\cite{BGV} for the discrete case.

A bounded linear differential system $\mathcal{A}$ is said to be \emph{strictly without dominated decomposition} if the only invariant subsets of $\Sigma$ that admit a dominated splitting for $\Phi_{A}^t$ are finite sets.

Let us now present a key result about linear differential systems which is the conservative  flow version of~\cite[Theorem 2.2]{BGV}.

\begin{theorem}(\cite[Theorem 4.1]{BR})\label{2.2}
Let $\mathcal{A}$ be a conservative, large period and bounded linear differential system. If $\mathcal{A}$ is strictly without dominated decomposition then there exist a conservative perturbation $\mathcal{B}$ of $\mathcal{A}$ and an infinite set $\Sigma^{\prime} \subset \Sigma$ which is  $\varphi^t$-invariant such that for every $x \in \Sigma^{\prime}$ the linear map $\Phi_{B}^{\pi(x)}(x)$ as all eigenvalues real and with the same modulus (thus equal to $1$ or to $-1$).
\end{theorem}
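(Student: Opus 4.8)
I would follow the scheme of \cite[Theorem~2.2]{BGV}, transposed to the conservative continuous-time setting of \cite{BR}. The plan is to work one closed orbit at a time: on all but finitely many closed orbits of $\Sigma$ I want a small, traceless (hence conservative) perturbation that makes the return cocycle $\Phi^{\pi(x)}$ have both eigenvalues equal to $1$ or both equal to $-1$, with the sizes of these perturbations tending to $0$ along the list of orbits. First I would unpack the hypothesis: over a closed orbit $O$ the only $\Phi_A^t$-invariant line fields are the eigendirections of $\Phi_A^{\pi(O)}$, so $O$ admits a dominated splitting exactly when it is hyperbolic, the relevant domination index being governed by its hyperbolicity rate. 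Since an $\ell$-dominated splitting is a closed condition that is stable under unions of invariant sets, the union of all orbits of $\Sigma$ carrying an $\ell$-dominated splitting is again an invariant subset of $\Sigma$ with an $\ell$-dominated splitting; as $\mathcal{A}$ is strictly without dominated decomposition, this forces, for every fixed $\ell$, only finitely many orbits of $\Sigma$ to admit an $\ell$-dominated splitting. Using that $\mathcal{A}$ is a large period system I then enumerate the closed orbits of $\Sigma$ as $O_1,O_2,\dots$ with non-decreasing periods, so $\pi(O_n)\to\infty$, and for each $\ell$ all but finitely many $O_n$ fail to carry an $\ell$-dominated splitting.

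The core is a one-orbit perturbation lemma of Ma\~{n}\'{e}--Bochi type: given $\varepsilon>0$, $\ell\in\mathbb{N}$ and a closed orbit $O$ of period $T$ on which $\Phi_A^t$ has no $\ell$-dominated splitting, provided $T$ is large enough in terms of $\varepsilon$, $\ell$ and the uniform bound $K$ there is a linear differential system $B$, equal to $A$ outside $O$, traceless along $O$, with $\|A(x)-B(x)\|<\varepsilon$ for $x\in O$, such that $\Phi_B^{T}$ restricted to $O$ has both eigenvalues equal to $1$ or both equal to $-1$. The mechanism is classical: the failure of $\ell$-domination means that the most expanded and most contracted directions of the iterates of $\Phi_A^t$ along $O$ are not kept at a uniform angle apart, so at suitably chosen ``good'' instants a rotation of small angle can be inserted, supported on a short sub-arc of $O$, whose effect on the return map accumulates effectively; there are of order $T/\ell$ such instants, each inserted rotation can be taken of angular size $<\varepsilon$ once $T$ is large, and their cumulative effect sweeps a net rotation of order one in the return map, which lies in $sl(2,\mathbb{R})$. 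Running the total inserted angle through a continuous parameter and tracking the trace of the resulting return map --- which starts outside $[-2,2]$ if $O$ is hyperbolic, and in all cases passes through $2$ or through $-2$ under such an order-one rotation --- the intermediate value theorem supplies a parameter at which the trace equals $2$ or $-2$ exactly, i.e.\ at which both eigenvalues are $1$ or both are $-1$. Since every infinitesimal rotation generator lies in $\mathfrak{sl}(2,\mathbb{R})$ and $A$ is traceless, all the intermediate systems are traceless, so conservativity is preserved throughout.

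To assemble the global perturbation I would fix $\varepsilon_n\downarrow 0$, say $\varepsilon_n=1/n$. For any prescribed $\ell$ the orbit $O_n$ has no $\ell$-domination for all large $n$, and by the large period property $\pi(O_n)$ eventually exceeds the threshold of the one-orbit lemma for the data $(\varepsilon_n,\ell,K)$; a diagonal choice of the index then produces $N$ such that each $O_n$ with $n\ge N$ admits a traceless $\varepsilon_n$-perturbation $B_n$, supported on $O_n$, making $\Phi_{B_n}^{\pi(O_n)}|_{O_n}$ have both eigenvalues $1$ or both $-1$. Let $B=A$ on $O_1,\dots,O_{N-1}$ and $B=B_n$ on $O_n$ for $n\ge N$. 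Then $\mathcal{B}=(\Sigma,\varphi^t,B)$ is conservative, and for every $\epsilon>0$ the set $\{x:\|A(x)-B(x)\|\ge\epsilon\}$ is contained in the union of the finitely many orbits with $\varepsilon_n\ge\epsilon$ and the orbits $O_1,\dots,O_{N-1}$; hence $\mathcal{B}$ is a conservative perturbation of $\mathcal{A}$ in the required sense. Finally $\Sigma':=\bigcup_{n\ge N}O_n$ is $\varphi^t$-invariant and infinite, and for each $x\in\Sigma'$ the map $\Phi_B^{\pi(x)}(x)$ has all eigenvalues real and of the same modulus, as claimed.

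The real work, and the step I expect to be the main obstacle, is the one-orbit lemma: making the trade-off between the domination index $\ell$, the period $T$ and the perturbation size $\varepsilon$ quantitative and uniform in $K$, extracting the ``good'' instants from the failure of domination, and realising the chain of rotations so as to (i) stay inside $\mathfrak{sl}(2,\mathbb{R})$, preserving conservativity, and (ii) leave the cocycle untouched away from the chosen short sub-arcs. Once this lemma is available the remainder is bookkeeping, the only delicate point being the diagonalisation that reconciles the ``finitely many exceptional orbits per $\ell$'' from the first step with the threshold periods supplied by the lemma.
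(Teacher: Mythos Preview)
Your proposal is correct and its overall scheme matches the paper's: the paper does not give a self-contained proof here but simply records that, once a conservative-flow perturbation framework is in place, the statement follows directly from \cite[Theorem~2.2]{BGV}; your sketch is a faithful unpacking of that reduction.

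The one place where your route and the paper's own toolkit diverge is the one-orbit perturbation step. You argue in the Ma\~{n}\'e--Bochi style, extracting ``good instants'' from the failure of $\ell$-domination, inserting small rotations there, and then running an intermediate-value argument on the trace to land exactly at $\pm 2$. The paper's analogous one-orbit lemma (Lemma~\ref{mpl2}) instead invokes \cite[Lemma~6.6]{BC}: for \emph{any} sufficiently long word $C_k\cdots C_1$ in $sl(2,\mathbb{R})$ there exist rotations of angle below $\delta$ making the product have real eigenvalues. That device needs only the large-period hypothesis and dispenses with the failure-of-domination input at the local level; your approach, by contrast, ties the admissible perturbation size directly to the domination index $\ell$, which is closer in spirit to the original argument in \cite{BGV} and makes the dichotomy in Theorem~\ref{T1} more transparent. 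Both mechanisms are valid here.
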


Once we develop the perturbation framework, specific of the con\-ser\-va\-ti\-ve-flow context, previous theorem follows directly from~\cite[Theorem 2.2]{BGV}. In the present paper we can obtain a similar result, restricted to our setting, as long as we show how to perform all the perturbations in our setting. These perturbations will be the content of \S\ref{proofs}.

\medskip

Fix $\omega,\epsilon>0$, a base flow $\varphi^t\colon M\rightarrow M$ and, for each $p\in\text{Per}(\varphi^t)$, a family of $\pi(p)$-periodic smooth coefficients $\{\psi_p(t)\}$. Let \begin{equation}\label{LDS}
 \mathpzc{A}= \mathpzc{A}(p,t,\omega,\epsilon,M,\varphi^t,A,\psi_p(t))
\end{equation}
denote the \emph{Mathieu linear differential system} induced by the Mathieu equation $\ddot{y}+(\omega^2+\epsilon\psi_p(t))=0$. From now on we assume that our Mathieu linear differential systems are bounded, with large period and strictly without dominated decomposition.

We now state our main result.

\begin{maintheorem}\label{T1}
Let $\mathpzc{A}$ denote a Mathieu linear differential system as in (\ref{LDS}). Fix $\delta>0$. Then there exist $\ell\in\mathbb{N}$ and $T>0$ such that any closed orbit $p$ of period $\pi(p)\geq T$ satisfy one of the two properties:
\begin{enumerate}
\item the $\varphi^t$-orbit of $p$ is \textbf{hyperbolic}\footnote{In fact we obtain that the $\varphi^t$-orbit has an $\ell$-dominated splitting. Then, by Lemma~\ref{hyperbolic}, we obtain hyperbolicity.} with strength $\ell$ or else
\item there is a system $B\in\mathcal{U}(A,\delta)$ such that the $\varphi^t$-orbit of $p$ is \textbf{parabolic} for $B$.
\end{enumerate}

\end{maintheorem}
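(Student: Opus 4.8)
The plan is to view $\mathpzc{A}$ as a conservative linear differential system over $\text{Per}(\varphi^t)$, to apply the abstract perturbation mechanism of~\cite{BGV,BR} one closed orbit at a time, and then to transcribe the resulting perturbation into the damped--Mathieu class $\mathcal{U}(A,\delta)$ by means of the formulas of the previous sections. First I would record that, with $A(p,t)$ as in Definition~\ref{definition}, the triple $\mathpzc{A}=(\text{Per}(\varphi^t),\varphi^t,A)$ is conservative (since $A(p,t)\in\mathfrak{sl}(2,\mathbb{R})$) and, by standing hypothesis, bounded (say $\|A(p,t)\|\le K$ for all $p,t$), of large period, and strictly without dominated decomposition. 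These are the hypotheses of Theorem~\ref{2.2}; but since I want a dichotomy valid for every sufficiently long orbit, I would use the perturbation lemma that underlies~\cite[Theorem~2.2]{BGV} (and its conservative-flow version in~\cite{BR}), which acts locally along a single closed orbit. It provides, for each $\delta'>0$, numbers $\ell\in\mathbb{N}$ and $T>0$, depending only on $\delta'$ and $K$, such that every closed orbit $p$ of period $\pi(p)\ge T$ either carries an $\ell$-dominated splitting for $\Phi_A^t$, or else admits a conservative perturbation $A_0=A(p,\cdot)+R(\cdot)$, supported on the $\varphi^t$-orbit of $p$ with $R$ smooth and $\pi(p)$-periodic and $\|R\|<\delta'$, for which $\Phi_{A_0}^{\pi(p)}(p)$ is parabolic, i.e. has real eigenvalues of modulus one (equal to $1$ or $-1$). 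The large-period hypothesis is exactly what keeps $\delta'$ small here: the bounded amount of rotation needed to reach the parabolic boundary is spread over the long interval $[0,\pi(p)]$, so its pointwise cost is of order $1/T$.

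Fixing $\delta>0$, I would then choose $\delta'>0$ small --- depending on $\delta$, $K$, $\omega$, $\epsilon$, and on the constants of the realization step below --- and take the corresponding $\ell$ and $T$. If a closed orbit $p$ with $\pi(p)\ge T$ is in the first case, then by Lemma~\ref{hyperbolic} its $\varphi^t$-orbit is hyperbolic with strength $\ell$, which is alternative~(1) of the theorem. If it is in the second case, it remains to realize the perturbed behaviour within $\mathcal{U}(A,\delta)$.

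This realization is the heart of the argument, and the content of \S\ref{proofs}. Following the computations leading to~(\ref{form})--(\ref{form2}) and the parametrization of Definition~\ref{definition}, I would look for a momentum change $\dot y=\alpha_p(t)y+\beta_p(t)z$, with $\alpha_p$ a small smooth $\pi(p)$-periodic function and $\beta_p(t)=e^{-\int_0^t\Gamma_p(s)\,ds}$ for a small smooth $\pi(p)$-periodic $\Gamma_p$ of zero mean over $[0,\pi(p)]$ (so $\beta_p$ is itself $\pi(p)$-periodic and $C^1$-close to the constant $1$), and set
$$\gamma_p(t):=\beta_p^{-1}(t)\left(-\omega^2-\epsilon\psi_p(t)-\dot\alpha_p(t)-\alpha_p^2(t)+\frac{\dot\beta_p(t)\,\alpha_p(t)}{\beta_p(t)}\right),$$
so that $\mathcal{C}(p,t)=\begin{pmatrix}\alpha_p(t)&\beta_p(t)\\ \gamma_p(t)&-\alpha_p(t)\end{pmatrix}$ is traceless and, via~(\ref{form2}), is the linear differential system of the damped Mathieu equation $\ddot y+\Gamma_p(t)\dot y+(\omega^2+\epsilon\psi_p(t))y=0$ of type~(\ref{DMEP}). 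Two functional parameters are then at my disposal. The first, $\alpha_p$, acts on $\Phi^{\pi(p)}(p)$ as a $\pi(p)$-periodic conjugation along the orbit and hence does not move the characteristic multipliers; it plays the role of a change of coordinates along the orbit. The second, the zero-mean damping $\Gamma_p$, does move them: after the classical $\pi(p)$-periodic substitution $y=e^{-\frac12\int_0^t\Gamma_p}u$, it amounts to replacing the Hill potential $\omega^2+\epsilon\psi_p$ by $\omega^2+\epsilon\psi_p-\tfrac14\Gamma_p^2-\tfrac12\dot\Gamma_p$, and $-\tfrac12\dot\Gamma_p$ can be made, up to the second-order term $-\tfrac14\Gamma_p^2$, an essentially arbitrary small zero-mean periodic perturbation of the potential. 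I would then check that these two freedoms, exploited over the long period $\pi(p)\ge T$, suffice to reproduce the effect of $R$ in the second case --- namely to drive the trace of $\Phi^{\pi(p)}(p)$ to $2$ or $-2$ --- while keeping $\|A(p,\cdot)-\mathcal{C}(p,\cdot)\|<\delta$. Setting $B:=\mathcal{C}$, with $\mathcal{C}(q,\cdot)=A(q,\cdot)$ on every orbit $q\neq p$ (that is, $\alpha_q\equiv 0$, $\beta_q\equiv 1$), we obtain $B\in\mathcal{U}(A,\delta)$ with the $\varphi^t$-orbit of $p$ parabolic for $B$, which is alternative~(2).

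The step I expect to be the genuine obstacle is this last one. The perturbation furnished by~\cite{BGV,BR} is an abstract shear or rotation of the cocycle inserted along the orbit, whereas $\mathcal{U}(A,\delta)$ permits only a momentum change together with a zero-mean periodic damping (with $\gamma_p$ then forced and the coefficient $\psi_p$ left untouched), all required to remain smooth, $\pi(p)$-periodic, and $\delta$-small. Verifying that this rigid two-parameter family is nevertheless rich enough to push every long, non-$\ell$-dominated orbit onto the parabolic boundary --- with the large-period budget absorbing the smallness constraint --- is precisely what has to be carried out in \S\ref{proofs}.
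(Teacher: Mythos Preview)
Your high–level skeleton (the dichotomy ``$\ell$-dominated splitting vs.\ perturbable to parabolic'', followed by a realization step inside $\mathcal{U}(A,\delta)$) is the same as the paper's. The difference is entirely in how the realization step is carried out, and your version of it contains a self-imposed obstruction.

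You read Definition~\ref{definition} as forcing $\gamma_p$ via the displayed formula with the \emph{original} $\psi_p$, so that the only genuine dynamical freedom is the zero-mean damping $\Gamma_p$ (the momentum change $\alpha_p$ being, as you correctly observe, a periodic conjugation). You then try to realize the abstract $\mathfrak{sl}(2,\mathbb{R})$ perturbation of \cite{BGV,BR} using only this zero-mean potential shift $-\tfrac12\dot\Gamma_p-\tfrac14\Gamma_p^2$, and you flag this as the hard point. The paper does not attempt this. In Lemma~\ref{mpl} the perturbed second-order equation comes out with a new excitation $\Psi_\eta=\psi+\epsilon^{-1}\Theta_\eta$, not with $\psi$ itself; both periodic coefficients $\Gamma$ \emph{and} $\Psi$ are moved. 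So either you have read the definition of $\mathcal{U}(A,\delta)$ more restrictively than the paper uses it, or you are trying to prove a strictly harder statement. Under your reading the realization step is genuinely unclear (a one-parameter zero-mean shift of the Hill potential is a thin family), whereas under the paper's usage it is straightforward.

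The second methodological difference is that the paper never tries to ``reproduce the effect of $R$'' for an abstract perturbation handed over by \cite{BGV,BR}. Instead it builds the needed perturbation directly inside the Mathieu class: for a time-one segment one writes down the rotation cocycle $\Phi^t(p)R_{\eta g(t)}$, computes its generator $B=A+H$ with $H=\Phi^t\dot R_{\eta g}R_{-\eta g}(\Phi^t)^{-1}$, checks $H$ is traceless, and then reads off the damped-Mathieu coefficients $\Gamma_\eta,\Psi_\eta$ from formula~(\ref{form}) (Lemma~\ref{mpl}). This shows that arbitrarily small rotations are available in $\mathcal{U}(A,\delta)$. The passage from ``small rotations on each unit segment'' to ``real eigenvalues for the full period map'' is then purely combinatorial, via \cite[Lemma~6.6]{BC}: given $k$ factors $C_j=\Phi_A^1(\varphi^{j-1}(p))$ with $k$ large, there are angles $|\theta_j|<\delta$ such that $C_kR_{\theta_k}\cdots C_1R_{\theta_1}$ has real eigenvalues (Lemma~\ref{mpl2}). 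You do not invoke this lemma; it is what replaces your ``spread the rotation over $[0,\pi(p)]$'' heuristic and makes the large-period hypothesis do the work.

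In short: drop the constraint that $\psi_p$ is untouched, construct the time-one rotation perturbation explicitly and verify it lands in the damped-Mathieu class (that is Lemma~\ref{mpl}), and then chain these rotations along the orbit using \cite[Lemma~6.6]{BC} rather than trying to match an abstract $R$.
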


A few words about previous theorem. Condition (1) says that the Mathieu system with $\pi(p)$-periodic coefficient over $p$ is \emph{unstable}. Even more, the degree of instability shared by all the Mathieu equations (over points $p$ satisfying condition (1)) is uniform. In this case the characteristic multipliers are uniformly ``far'' from the tongues of instability, that is, the characteristic multipliers are uniformly away from $1$ and $-1$. Condition (2) says that, a perturbation can be made in order to the characteristic multipliers fall into the boundary of instability, this perturbation is a Hamiltonian damped Mathieu system with two large periodic coefficients. In conclusion, near any Mathieu system with parametric excitations of large period, there exists a perturbation of the two coefficients such that, if we neglect a finite number of closed orbits, any system over any closed orbit $p$ is \emph{unstable}. In other words stability tends to disappear when the period of the coefficient increases and some first order term with a small periodic coefficient is introduced.

\medskip

We end this section by noting that, Theorem~\ref{2.2} and the perturbations to be de\-ve\-lo\-ped in \S \ref{proofs}, allows us to obtain item (2) of Theorem~\ref{T1} if we have the strictly without domination property. In brief terms weak hyperbolicity can be made parabolic (see paragraph after the proof of Lemma~\ref{mpl}) and ellipticity can also be made parabolic, under certain hypothesis, by small perturbations (cf. Lemma~\ref{mpl2}).

On the other hand, if we do not have this property, we must have a uniform dominated slitting, obtaining (1).

\end{subsection}

\end{section}

\begin{section}{Construction of the perturbations}\label{proofs}

\begin{subsection}{Inducing rotations} We are going to perform some perturbations defined by  small rotations of solutions on the invariant eigendirections. Results of this type go back to Novikov ~\cite{N} and Ma\~n\'e ~\cite{M}. See also \cite{B} where perturbations of these type were made in the two-dimensional Hamiltonian setting of linear differential systems. 

\begin{lemma}\label{mpl}
Let $\mathpzc{A}$ denote the linear differential system as in (\ref{LDS}). Fix $\delta>0$. There exists $\eta>0$, such that given any $p\in\text{Per}(\varphi^t)$ (with $\pi(p)>1$), there exists $B\in\mathcal{U}(A,\delta)$ such that
\begin{enumerate}
\item [(A)] $B$ is supported in $\varphi^{t}(p)$ for $t\in[0,1]$ and
\item [(B)] $\Phi^{1}_{B}(p)=\Phi^{1}_{A}(p)\cdot{R_{\eta}}$, where $R_{\eta}$ is a rotation of angle $\eta$.
\end{enumerate}
\end{lemma}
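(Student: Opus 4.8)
The plan is to realize the rotation $R_\eta$ by adding a small, compactly supported perturbation to the infinitesimal generator $A$ along the piece of orbit $\{\varphi^t(p): t\in[0,1]\}$, and then to check that the perturbed generator still lies in a $\delta$-neighborhood $\mathcal{U}(A,\delta)$ in the sense of Definition~\ref{definition}. The natural candidate is $B(\varphi^t(p)) := A(\varphi^t(p)) + \theta(t)\,J$ for $t\in[0,1]$ and $B=A$ elsewhere, where $J=\begin{pmatrix}0 & -1\\ 1 & 0\end{pmatrix}$ is the generator of rotations and $\theta$ is a smooth bump function on $[0,1]$ with $\int_0^1\theta(t)\,dt=\eta$. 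Since $J$ is traceless, $B$ remains traceless, so it stays in $\mathfrak{sl}(2,\mathbb{R})$. For $\|B-A\|$ to be less than $\delta$ we need $\sup_t|\theta(t)|$ small; since only $\int\theta = \eta$ is prescribed, and $\eta$ will itself be the small quantity we are free to choose, we may take $\theta$ with $C^0$-norm of order $\eta$, which pins down the required $\eta = \eta(\delta)$.

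The key computational step is to verify property (B), i.e. that $\Phi_B^1(p) = \Phi_A^1(p)\cdot R_\eta$. This is a standard variation-of-parameters argument: write $\Phi_B^t(p) = \Phi_A^t(p)\cdot S(t)$ and differentiate to get $\dot S(t) = [\Phi_A^t(p)]^{-1}\,\theta(t)J\,\Phi_A^t(p)\,S(t)$. In general the conjugated generator $[\Phi_A^t]^{-1}J\,\Phi_A^t$ is not $J$ itself, so $S(1)$ is not exactly $R_\eta$. The cleanest fix — and the one I expect the author to use, following the Novikov--Mañé tradition and \cite{B} — is to conjugate the perturbation by $\Phi_A^t$ itself: set $B(\varphi^t(p)) := A(\varphi^t(p)) + \theta(t)\,\Phi_A^t(p)\,J\,[\Phi_A^t(p)]^{-1}$, so that $\dot S(t)=\theta(t)J\,S(t)$ and hence $S(1)=\exp\!\big(\eta J\big)=R_\eta$ exactly. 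The cost is that $\|B-A\|$ now involves the factor $\|\Phi_A^t(p)\|\cdot\|[\Phi_A^t(p)]^{-1}\|$ on $t\in[0,1]$; this is where boundedness of $\mathpzc A$ enters, via the Gronwall bound $\|\Phi_A^t(p)\|\le e^{K}$ for $t\in[0,1]$ quoted in the excerpt, giving a uniform constant $C=C(K)$ so that $\|B-A\|\le C\sup_t|\theta(t)|$, and one chooses $\eta$ accordingly. Property (A) is immediate from the support of $\theta$.

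Finally I would translate this back into the language of Definition~\ref{definition}: the perturbed system $B$, being traceless and $C^0$-close to $A$, can be written in the form $\mathcal{C}(p,t)$ with entries $\alpha_p,\beta_p,\gamma_p$, where $\alpha_p$ is $C^1$-small, $\beta_p$ is $C^1$-close to $1$, and $\gamma_p$ is given by the prescribed formula; since the perturbation is $C^0$-small in the generator one has to be slightly careful that it induces only a $C^1$-small change in the off-diagonal entry $\beta_p$ (equivalently a $C^0$-small $\Gamma(t)$ in the damped Mathieu picture), which is why the perturbation is taken smooth in $t$ with controlled derivative. The main obstacle, then, is not the existence of the rotation perturbation — that is classical — but bookkeeping: ensuring simultaneously that (i) $B$ stays traceless, (ii) $\|B-A\|<\delta$ with a uniform $\eta=\eta(\delta)$ independent of $p$ (which forces the conjugation trick plus the uniform bound $K$), and (iii) $B$ genuinely lies in the structured neighborhood $\mathcal{U}(A,\delta)$ of Definition~\ref{definition}, i.e. arises from an admissible choice of momentum $\dot y=\alpha_p(t)y+\beta_p(t)z$ with the right $C^1$ smallness.
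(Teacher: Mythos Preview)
Your proposal is correct and is essentially the paper's own argument: the paper also sets $B=A+\eta\dot g(t)\,\Phi_A^t(p)\,J\,[\Phi_A^t(p)]^{-1}$ (your $\theta$ is their $\eta\dot g$, with $\int_0^1\eta\dot g=\eta$), checks tracelessness, support, and $\Phi_B^1(p)=\Phi_A^1(p)R_\eta$ exactly as you describe, and uses boundedness plus the time-$1$ Gronwall bound to get a uniform $\eta(\delta)$. The only difference is that the paper carries out explicitly the final bookkeeping you sketch, writing out the entries of $B$ and deriving the damped Mathieu coefficients $\Gamma_\eta(t)$, $\Psi_\eta(t)$ to verify $B\in\mathcal U(A,\delta)$ in the sense of Definition~\ref{definition}.
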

\begin{proof}
Let $\eta\in(0,1)$ and $g\colon\mathbb{R}\rightarrow{\mathbb{R}}$ be a \emph{bump-function} defined by $g(t)=0$ for
$t<0$, $g(t)=t$ for $t\in[\eta,1-\eta]$ and $g(t)=1$ for
$t\geq{1}$. Fix a periodic point $p$ with $\pi(p)>1$ and define for all $t\geq 0$:

$$\Phi^{t}_A(p)=\Phi^t(p)=\begin{pmatrix}
a(t) & b(t) \\
c(t) & d(t) \\
\end{pmatrix}\text{  and  } R_{\eta{g(t)}}=\begin{pmatrix}
\cos(\eta{g(t)}) & -\sin(\eta{g(t)}) \\
\sin(\eta{g(t)}) & \,\,\,\,\,\cos(\eta{g(t)}) \\
\end{pmatrix}.$$

We know that $u(t)=\Phi^{t}(p)$ is a solution of the linear
variational equation $\dot{u}(t)=A(t)\cdot u(t)$. Take $\Phi^{t}(p)\cdot{R}_{\eta{g(t)}}$
and then compute the derivative w.r.t. time:
\begin{equation*}
\begin{split}
\frac{d}{dt}(\Phi^{t}(p)\cdot{R}_{\eta{g(t)}})&=\dot{\Phi}^{t}(p)R_{\eta{g(t)}}+\Phi^{t}(p)\dot{R}_{\eta{g(t)}}\\
&=A(\varphi^{t}(p))\Phi^{t}(p)R_{\eta{g(t)}}+\Phi^{t}(p)\dot{R}_{\eta{g(t)}}\\
&=A(\varphi^{t}(p))\Phi^{t}(p)R_{\eta{g(t)}}+\Phi^{t}(p)\dot{R}_{\eta{g(t)}}R_{-\eta{g(t)}}[\Phi^{t}(p)]^{-1}\Phi^{t}(p)R_{\eta{g(t)}}\\
&=[A(\varphi^{t}(p))+\Phi^{t}(p)\dot{R}_{\eta{g(t)}}R_{-\eta{g(t)}}[\Phi^{t}(p)]^{-1}]\cdot(\Phi^{t}(p)R_{\eta{g(t)}}).
\end{split}
\end{equation*}
Define $B(\varphi^{t}(p))=A(\varphi^{t}(p))+H(\varphi^{t}(p))$ where: $$H(\varphi^{t}(p))=H(\eta,t)=\Phi^{t}(p)\dot{R}_{\eta{g(t)}}R_{-\eta{g(t)}}[\Phi^{t}(p)]^{-1}.$$ 
We conclude that $v(t)=\Phi^{t}(p)\cdot{R}_{\eta{g(t)}}$ is a solution of the linear variational equation:
\begin{equation}\label{linear}
\dot{v}(s)_{|s=t}=[A(\varphi^{t}(p))+H(\eta,t)]v(t)
\end{equation}
Since,
$$\dot{R}_{\eta{g(t)}}\cdot{R}_{-\eta{g(t)}}=\eta{\dot{g}}(t)\begin{pmatrix}
0 & -1 \\
1 & 0 \\
\end{pmatrix},$$
and since $\text{Tr}(A)=0$ we use (\ref{liouville}) and we obtain that
\begin{eqnarray*}
H(\eta,t)&=&\frac{\eta{\dot{g}}(t)}{\det\Phi^{t}(p)}\begin{pmatrix}
b(t)d(t)+a(t)c(t) & -b(t)^{2}-a(t)^{2} \\
d(t)^{2}+c(t)^{2} & -b(t)d(t)-a(t)c(t) \\
\end{pmatrix}\\
&=&  \eta{\dot{g}}(t)\begin{pmatrix}
b(t)d(t)+a(t)c(t) & -b(t)^{2}-a(t)^{2} \\
d(t)^{2}+c(t)^{2} & -b(t)d(t)-a(t)c(t) \\
\end{pmatrix}.
\end{eqnarray*}

Hence $\text{Tr}(H(\eta,t))=0$, moreover since $\dot{g}(t)=0$ for
$t\notin]0,1[$, its support is $\varphi^{t}(p)$ for $t\in[0,1]$ and (A) is proved.
Since $t\in[0,1]$ and all the terms in the definition of
$H(\eta,t)$ are uniformly bounded for all $p\in{M}$ (by definition of these type of systems), given any size
of perturbation allowed by $\delta>0$ we take $\eta(\delta)$ sufficiently
small to guarantee that we get $\|A-B\|=\|H\|<\delta$.

Finally, for (B), we observe that $v(t)=\Phi_{A}^{t}(p)\cdot{R}_{\eta{g}(t)}$ is solution of~(\ref{linear}). So for $t=1$ we obtain $\Phi_{B}^{1}(p)=\Phi_{A}^{1}(p)\cdot{R}_ {\eta}$. 

To conclude the proof of the lemma we only need to see that $B$ generates a traceless damped Mathieu system with two perturbed coefficients $\Gamma(t)$ and $\Psi(t)$ close to zero and $\psi(t)$ respectively.
\begin{eqnarray*}
B&=&A(t)+H(\eta,t)\\
&=&\begin{pmatrix}
0 & 1 \\ -\omega^2-\epsilon\psi(t) & 0
\end{pmatrix}+\eta{\dot{g}}(t)\begin{pmatrix}
b(t)d(t)+a(t)c(t) & -b(t)^{2}-a(t)^{2} \\
d(t)^{2}+c(t)^{2} & -b(t)d(t)-a(t)c(t) \\
\end{pmatrix}\\
&=&\begin{pmatrix}
\eta{\dot{g}}(t)(b(t)d(t)+a(t)c(t)) & 1-\eta{\dot{g}}(t)(b(t)^{2}+a(t)^{2}) \\ -\omega^2-\epsilon\psi(t)+\eta{\dot{g}}(t)(d(t)^{2}+c(t)^{2}) & -\eta{\dot{g}}(t)(b(t)d(t)+a(t)c(t))
\end{pmatrix}.
\end{eqnarray*}

We abbreviate $F=b(t)d(t)+a(t)c(t)$, $G=b(t)^{2}+a(t)^{2}$ and $H=d(t)^{2}+c(t)^{2}$ and thus,

$$B=\begin{pmatrix}
\eta{\dot{g}}(t)F & 1-\eta{\dot{g}}(t)G \\ -\omega^2-\epsilon\psi(t)+\eta{\dot{g}}(t)H & -\eta{\dot{g}}(t)F
\end{pmatrix}.$$

Using equation (\ref{form}) we obtain (denoting also by $^\prime$ the derivative w.r.t. $t$):

$$
\ddot{y}-\left(\frac{(1-\eta\dot{g}G)^{\prime}}{1-\eta\dot{g}G}\right)\dot{y}+\left(\det(B)-(\eta\dot{g}F)^{\prime}+\frac{(1-\eta\dot{g}G)^{\prime}\eta\dot{g}F}{1-\eta\dot{g}G}\right)y=0,
$$
which is equivalent to
$$
\ddot{y}+\left(\frac{\eta\ddot{g}G+\eta\dot{g}\dot{G}}{1-\eta\dot{g}G}\right)\dot{y}+\left(\det(B)-\eta\ddot{g}F-\eta\dot{g}\dot{F}+\frac{(-\eta\ddot{g}G-\eta\dot{g}\dot{G})\eta\dot{g}F}{1-\eta\dot{g}G}\right)y=0.
$$
Now, since $$\det(B)=\omega^2+\epsilon\psi(t)+\eta^2\dot{g}^2(GH-F^2)-\eta\dot{g}(H+\omega^2 G+G\epsilon \psi(t)),$$
we define the $\tau$-periodic coefficients by:
$$
\Gamma_\eta(t):=\frac{\eta\ddot{g}G+\eta\dot{g}\dot{G}}{1-\eta\dot{g}G}$$

and 

$$\Theta_{\eta}(t):=- \eta\ddot{g}F-\frac{(\ddot{g}G+\dot{g}\dot{G})\eta^2\dot{g}F}{1-\eta\dot{g}G}+\eta^2\dot{g}^2(GH-F^2)-\eta\dot{g}(\dot{F}+H+\omega^2 G+G\epsilon \psi(t)).
$$
Finally, we obtain,

\begin{equation}
 \ddot{y}+\Gamma_\eta(t)\dot{y}+(\omega^2+\epsilon\Psi_\eta(t))=0,
\end{equation}
where $\Psi_\eta(t):=\psi(t)+\epsilon^{-1}\Theta_{\eta}(t)$. Observe that, if $\eta$ is taken very small, then $\Gamma_\eta(t)$ is close to zero, moreover, since $\Theta_\eta(t)\underset{\eta\rightarrow 0}{\rightarrow} 0$,  $\Psi_\eta(t)$ is close to $\psi(t)$. Then $B\in\mathcal{U}(A,\delta)$ according to Definition~\ref{definition} and the lemma is proved.

\end{proof}

We observe that condition (2) of Theorem~\ref{T1} assures that there is a Hamiltonian damped Mathieu system $B$ with perturbed coefficients, arbitrarily close to the original one, such that $\Phi_B^{\pi(p)}(p)$ has only real eigenvalues of the same modulus (thus equal to one). A perturbation similar to the one constructed in Lemma~\ref{mpl} can be done, by adding small directional homotheties, in order to obtain two real characteristic multipliers (hyperbolic case). 

\medskip

The next crucial lemma show how to create instability (parabolic or hyperbolic points), by small perturbations, once we are in the presence of an elliptic point and we have enough ``time'' to perturb.

\begin{lemma}\label{mpl2}
Let $\mathpzc{A}$ denote the Mathieu linear differential system as in (\ref{LDS}) and $\delta>0$. Then there exists $T>0$ such that for any closed orbit $p$ of period $\pi(p)\geq T$ there is $B\in\mathcal{U}(A,\delta)$ satisfying
\begin{itemize}
\item all the eigenvalues of $\Phi_B^{\pi(p)}$ are real, and 
\item $B=A$ outside a small neigh\-bour\-hood of the orbit of $p$.  
\end{itemize}
\end{lemma}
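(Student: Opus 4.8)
The plan is to derive the lemma from Lemma~\ref{mpl} by reducing it to a statement about long products in $\mathrm{SL}(2,\mathbb{R})$, and then to invoke the classical rotation technique of Novikov and Ma\~n\'e.

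First I would dispose of the trivial case: if $\Phi_A^{\pi(p)}(p)$ already has only real eigenvalues, take $B=A$. So from now on assume $\Phi_A^{\pi(p)}(p)$ is elliptic, i.e.\ $|\mathrm{Tr}\,\Phi_A^{\pi(p)}(p)|<2$ (all systems involved remain in $\mathrm{SL}(2,\mathbb{R})$, so $\det\Phi^{t}=1$ for every time). Let $K>0$ satisfy $\|A(p,t)\|\le K$ and let $\eta=\eta(\delta)>0$ be the constant of Lemma~\ref{mpl}. Put $N:=\lfloor\pi(p)\rfloor$ and apply a time-translate of Lemma~\ref{mpl} on each of the $N$ pairwise disjoint unit intervals $[j-1,j]$, $j=1,\dots,N$, with rotation angles $\phi_1,\dots,\phi_N\in[-\eta,\eta]$ still to be chosen. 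Since the supports are disjoint, on each interval the perturbation is exactly the single-insertion system of Lemma~\ref{mpl} — in particular traceless, of Hamiltonian damped-Mathieu form, and $\|B-A\|=\max_j\|H_j\|<\delta$ — so the resulting $B$ lies in $\mathcal{U}(A,\delta)$ and equals $A$ off an arbitrarily small tube around the orbit of $p$. Writing $M_j:=\Phi_A^{1}(\varphi^{j-1}(p))$ (so $\|M_j\|=\|M_j^{-1}\|\le e^{K}$) and $L:=\Phi_A^{\pi(p)-N}(\varphi^{N}(p))$ (so $\|L\|\le e^{K}$), iterating the computation of the proof of Lemma~\ref{mpl} gives
$$\Phi_B^{\pi(p)}(p)=L\,M_NR_{\phi_N}\,M_{N-1}R_{\phi_{N-1}}\cdots M_1R_{\phi_1},\qquad L\,M_N\cdots M_1=\Phi_A^{\pi(p)}(p).$$

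It then remains to show that there is $N_0=N_0(\delta,K)$ such that whenever $N\ge N_0$ the angles $\phi_j\in[-\eta,\eta]$ can be chosen so that $L\,M_NR_{\phi_N}\cdots M_1R_{\phi_1}$ has only real eigenvalues; one then takes $T:=N_0+1$. I would argue by a dichotomy. If $\|\Phi_A^{\pi(p)}(p)\|$ exceeds an explicit threshold $\Lambda_0(\eta)$, then a single rotation on $[0,1]$, which produces $\Phi_B^{\pi(p)}(p)=\Phi_A^{\pi(p)}(p)\,R_\phi$, already forces $|\mathrm{Tr}|\ge 2$ for some $|\phi|\le\eta$: an elliptic $E\in\mathrm{SL}(2,\mathbb{R})$, with off-diagonal entries $b,c$, satisfies $bc<0$, hence a large $\|E\|$ forces $|b-c|$ to be large, and since $\mathrm{Tr}(ER_\phi)=\cos\phi\,\mathrm{Tr}(E)+\sin\phi\,(b-c)$ the trace leaves $(-2,2)$ while $|\phi|$ is still at most $\eta$ — this regime uses nothing beyond $\pi(p)>1$. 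If instead $\|\Phi_A^{\pi(p)}(p)\|\le\Lambda_0(\eta)$, I would appeal to the rotation technique of \cite{B,BGV,BR}: since the return map (and one may assume also the intermediate pieces, for otherwise we are in the first regime) has controlled projective distortion, one checks that as the $\phi_j$ range over $[-\eta,\eta]^N$ the rotation number on $\mathbb{RP}^1$ of $L\,M_NR_{\phi_N}\cdots M_1R_{\phi_1}$ varies continuously and sweeps an interval of length greater than $\pi$ once $N\ge N_0(\delta,K)$; by the intermediate value theorem it then hits $0$ in $\mathbb{R}/\pi\mathbb{Z}$ for some choice of angles, so that the corresponding product fixes a line in $\mathbb{RP}^1$ and therefore has real eigenvalues.

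The step I expect to be the main obstacle is exactly this last one: a priori, many small rotations inserted between the $M_j$ could each be ``absorbed'' by the cocycle, so their cumulative effect on the trace (equivalently, the projective rotation number) need not be large. The dichotomy is what removes the difficulty — either there is genuine hyperbolic-like behaviour along the orbit, which a single rotation exploits by an elementary two-dimensional computation, or the cocycle is uniformly bounded, the projective maps have bounded distortion, and the $N$ tiny rotations can be made to accumulate a full half-turn of projective drift; this is the Ma\~n\'e--Novikov mechanism as implemented in \cite{B,BGV,BR}. The remaining verifications — that $B$ is traceless, stays $\delta$-close to $A$, has damped-Mathieu form with the two $\pi(p)$-periodic coefficients close to $0$ and to $\psi_p$, and coincides with $A$ outside a small neighbourhood of the orbit — are carried out exactly as at the end of the proof of Lemma~\ref{mpl}, now for a concatenation of perturbations with pairwise disjoint supports.
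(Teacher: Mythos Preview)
Your overall strategy coincides with the paper's: split the orbit into unit arcs, realise a small rotation on each arc via Lemma~\ref{mpl}, and observe that the resulting $B$ stays in $\mathcal{U}(A,\delta)$ with $\Phi_B^{\pi(p)}(p)=L\,M_NR_{\phi_N}\cdots M_1R_{\phi_1}$. The paper then disposes of the angle-choice step in one line by invoking \cite[Lemma~6.6]{BC}: there is $N=N(\delta)$, depending only on the allowed rotation size and \emph{not} on the matrices, such that for any $C_1,\dots,C_k\in SL(2,\mathbb{R})$ with $k>N$ one can pick $|\theta_j|<\delta$ so that $C_kR_{\theta_k}\cdots C_1R_{\theta_1}$ has real eigenvalues. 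No dichotomy and no control on intermediate products is required.

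Your dichotomy is an attempt to reprove exactly this lemma; Case~1 (large $\|\Phi_A^{\pi(p)}(p)\|$) is correct, but Case~2 has a genuine gap. The parenthetical ``one may assume also the intermediate pieces [are bounded], for otherwise we are in the first regime'' is not justified: your first regime is a hypothesis on the full return map $\|\Phi_A^{\pi(p)}(p)\|$, not on partial products $\|\Phi_A^{j}(p)\|$, so a large intermediate norm does not put you back in Case~1. Without a uniform bound on the partial products the projective maps induced by the $\Phi_A^j$ can have arbitrarily large distortion, and then the asserted accumulation of a half-turn of projective rotation number from $N$ tiny insertions is exactly the nontrivial point --- it is precisely the content of \cite[Lemma~6.6]{BC}, whose proof handles the unbounded case by a more delicate argument than the one you sketch. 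Replacing your dichotomy by a direct citation of that lemma closes the gap and brings the argument in line with the paper's.
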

\begin{proof}

Fix a small $\delta >0$. Let $R_{\theta}$ denote the plane rotation of angle $\theta$. By ~\cite[Lemma 6.6]{BC} there exists $N=N(\epsilon) \in \mathbb{N}$ satisfying the following:  for any $k>N$ and for any $C_1$, $C_2$,...,$C_k \in sl(2,\mathbb{R})$ there are rotations $R_{\theta_1}$,   $R_{\theta_2}$,...,$R_{\theta_k}$, with $|\theta_j|< \delta$ for all $j \in \{1,2,...,k\}$, such that the linear map 
$$ C_k \cdot R_{\theta_k} \cdot C_{k-1} \cdot R_{\theta_{k-1}} \cdot ... \cdot C_1 \cdot R_{\theta_1}$$ has real eigenvalues.

Let us fix a periodic orbit $\Omega$ and $p \in \Omega$ with $\pi(p) \geq N$. We assume that $\Phi_A^{\pi(p)}(p)$ has a complex eigenvalue. Assuming that $\pi(p)=k \in \mathbb{N}$, we consider, for each $j\in\{1,...,k\}$ the linear maps $C_j \colon \mathbb{R}^2 \rightarrow \mathbb{R}^2$ defined by 
$$C_j=\Phi_A^1(\varphi^{j-1}(p)).$$ 
If $\pi(p) \notin \mathbb{N}$ we take $k=[\pi(p)]$, consider $C_1,...,C_{k-1}$ as before and define 
$$
\begin{array}{cccc}
C_k\colon & \mathbb{R}^2_{\varphi^{k-1}(p)} & \longrightarrow & \mathbb{R}^2_{\varphi^{\pi(p)}(p)} \\
& v & \longmapsto & \Phi_A^{1+\pi(p)-k}(\varphi^{k-1}(p))\cdot \, v.
\end{array}
$$

In what follows, without loss of generality, we assume that $\pi(p)=k \in \mathbb{N}$. 

We observe that each $C_j$ can be identified with a linear map of $sl(2, \mathbb{R})$ and that $\Phi_A^{\pi(p)}(p)=C_k \cdot C_{k-1}\cdot(...)\cdot C_1$. Therefore, \cite[Lemma 6.6]{BC} gives a family of  rotations $R_{\theta_1}$,   $R_{\theta_2}$,...,$R_{\theta_k}$ with the properties described above. 

Now we apply Lemma~\ref{mpl} to each arc $\{\varphi^{j-1+t}(p)\colon t\in[0,1]\}$ and to the maps $C_j$ and $R_{\theta_j}$ and realize the perturbations, i.e., find adequate coefficients $\Gamma_p^{j}(t)$ and $\Psi_p^j(t)$. 

\end{proof}

\end{subsection}

\end{section}

\section*{Acknowledgements}

The author was partially supported by the FCT-Funda\c{c}\~ao para a Ci\^encia e a Tecnologia -- project PTDC/MAT/099493/2008 and also SFRH/BPD/20890/2004.



\begin{thebibliography}{ABC}



\bibitem{B} M. Bessa, \emph{Dynamics of generic 2-dimensional linear differential systems}, Jr. Diff. Eq., 228, 2, (2006), 685--706.



\bibitem{BR} M. Bessa and J. Rocha, \emph{On $C^1$-robust transitivity of volume-preserving flows}, Jr. Diff. Eq., 245, 11, (2008), 3127--3143.


\bibitem{BC} C. Bonatti and S. Crovisier, \emph{R\'ecurrence et g\'en\'ericit\'e}, Invent. Math. 158, 1 (2004), 33--104.

\bibitem{BDV}
C.~Bonatti, L.~J. D{\'{i}}az, and M.~Viana.
\newblock Dynamics beyond uniform hyperbolicity, {102} of {\em
  {Encyclopaedia of Mathematical Sciences}}.
\newblock {Springer-Verlag}, {Berlin}, {2005}.
\newblock {A global geometric and probabilistic perspective, Mathematical
  Physics, III}.

\bibitem{BGV} C. Bonatti, N. Gourmelon and T. Vivier, \emph{Perturbations of the derivative along periodic orbits.}
Ergod. Th. \&\ Dynam. Sys. 26, 5 (2006), 1307--1337.

\bibitem{HK}
J. Hale and H. Ko\c{c}ak,
\newblock {\em {Dynamics and bifurcations}}, {3}, {\em {Texts in Applied Mathematics}}.
\newblock {Springer-Verlag}, {New York}, {1991}.


\bibitem{C} W. Coppel, Dichotomies in Stability Theory. Lecture Notes in Mathematics 629. Springer Verlag, 1978.


\bibitem{M}
R.~Ma{\~{n}}\'{e},
\newblock \emph{An ergodic closing lemma},
\newblock Annals of Math., \textbf{116}, (1982), 503--540.

\bibitem{Ma}
E.~Mathieu,
\newblock \emph{M\'emoire sur Le Mouvement Vibratoire d’une Membrane de forme Elliptique},
\newblock {\em Journal des Math\'ematiques Pures et Appliqu\'es}, (1868), 137--203.


\bibitem{N} V. L. Novikov, \emph{Almost reducible systems with almost periodic coefficients,}
Mat. Zametki 16 (1974), 789--799.


\bibitem{TN} J. H. Taylor and K. Narendra, \emph{Stability regions for the damped Mathieu equation,}
SIAM J. Appl. Math.  17,  (1969) 343--352.


\end{thebibliography}
\end{document}